
\documentclass[aps,preprint]{revtex4}%
\usepackage{amssymb}
\usepackage{amsmath}
\usepackage[dvips]{graphicx}
\usepackage{amsfonts}%
\setcounter{MaxMatrixCols}{30}
\providecommand{\U}[1]{\protect\rule{.1in}{.1in}}
\newtheorem{theorem}{Theorem}

\newtheorem{condition}[theorem]{Condition}

\newtheorem{definition}[theorem]{Definition}
\newtheorem{example}[theorem]{Example}

\newtheorem{lemma}[theorem]{Lemma}

\newtheorem{proposition}[theorem]{Proposition}

\newenvironment{proof}[1][Proof]{\textbf{#1.} }{\ \rule{0.5em}{0.5em}}
\begin{document}
\title{On ergodicity for multi-dimensional harmonic oscillator systems with
Nose-Hoover type thermostat }
\author{Ikuo Fukuda$^{1}$, Kei Moritsugu$^{2}$, and Yoshifumi Fukunishi$^{3}$}
\affiliation{$^{1}${\normalsize Institute for Protein Research, Osaka University, 3-2
Yamadaoka, Suita, Osaka 565-0871, Japan and Graduate School of Simulation
Studies, University of Hyogo, Kobe 650-0047, Japan}}
\affiliation{{\normalsize \noindent}$^{2}${\normalsize Graduate School of Medical Life
Science, Yokohama City University, Yokohama 230-0045, Japan}}
\affiliation{$^{3}${\normalsize Cellular and Molecular Biotechnology Research Institute,
AIST}}

\begin{abstract}
A simple proof and detailed analysis on the non-ergodicity for
multidimensional harmonic oscillator systems with Nose-Hoover type thermostat
are given. The origin of the nonergodicity is symmetries in the
multidimensional\ target physical system, and is differ from that in the
Nose-Hoover thermostat with\ the 1-dimensional harmonic oscillator. A new
simple deterministic method to recover the ergodicity is also presented. An
individual thermostat variable is attached to each degree of freedom, and all
these variables act on a friction coefficient for each degree of freedom. This
action is linear\ and controlled by a Nos\'{e} mass matrix $\mathbf{Q}$, which
is a matrix analogue of the scalar Nos\'{e}'s mass. Matrix $\mathbf{Q}$\ can
break the symmetry and contribute to attain the ergodicity.

\end{abstract}
\date{August 16, 2020}
\maketitle

\section{Introduction\label{Introduction}}

The Nos\'{e}-Hoover (NH)~\cite{N-H eq(1),N-H eq(2)} equation has been utilized
as basic equations of motion (EOM) in molecular dynamics (MD), which is now an
important tool to perform a realistic simulation of a physical
system~\cite{Hoover book1,AT,Schlick}. The NH equation is an ordinary
differential equation (ODE), based on the Newtonian EOM described by physical
coordinates $x\in\mathbb{R}^{n}$\ and momenta $p\in\mathbb{R}^{n}$. It is
obtained by adding a friction force that is $-(\zeta/Q)p$\ to the Newtonian
EOM and by adding a EOM\ for $\zeta$, where $Q$\ is a real parameter often
called as the Nos\'{e}'s mass. The friction coefficient-like quantity
$\zeta\in\mathbb{R}$ is thus a dynamical variable, and it is introduced to
control the temperature of the target physical system described by
$(x,p)$\ and maintain the value around a desired value $T_{\mathrm{ex}}$. It
is shown that the physical system obeys the Boltzmann-Gibbs (BG), or
canonical, distribution at temperature $T_{\mathrm{ex}}$\ if the total system
described by $(x,p,\zeta)$\ satisfies the ergodic condition.\ 

1-dimensional harmonic oscillator (1HO) has been investigated, for
theoretically studying the NH equation,\ as the most simple model system that
describes near a physical equilibrium. In a viewpoint of dynamical system
study, the NH equation with a 1HO has first been studied
numerically~\cite{PHV} and revealed to include both regular and chaotic
motions, implying that NH EOM with 1HO\ is nonergodic. The origin of the
nonergodicity has been considered as a lack of \textquotedblleft
complexity\textquotedblright, that is, small degrees of freedom of the system
(which is three) and a simple form of the ODE involving only two nonlinear
terms $-\zeta p$\ and $p^{2}$. Its non-ergodicity has been demonstrated
theoretically in case of a sufficiently large $Q$\ by using KAM theory to show
the existence of invariant tori~\cite{Legoll}. In contrast, the NH chain
method, which is an extension of the original NH method via introducing
multidimensional $\zeta\in\mathbb{R}^{m}$, has long been considered that it
gives the ergodicity even in the case of the 1HO. However, recently, Patra and
Bhattacharya indicated in long-time numerical simulations that the NHC\ with
1HO is non ergodic~\cite{Patra}.

However, 1HO is a special model in a viewpoint of the energy density of state
$\Omega(e)$\ for which $\Omega(e)=$constant, which is not increasing with
respect to the energy $e$ of the physical system. In this respect,
$n$-dimensional harmonic oscillator ($n$HO) with $n>1$ is normal in that
$\Omega(e)=ce^{n-1}$ ($c$ is irrelevant to $e$) shows the increasing. Despite
the importance in this respect, NH with $n$HO\ has not been much investigated.
One of a few examples is the study by Nos\'{e}~\cite{Nose1993}, where
recurrence phenomena strongly\ depending on initial conditions were found and
secular periodic modes can be captured by a simple Hamiltonian.

In the current study, we discuss the non ergodicity of $n$HO with $n>1$. We
specifically show in a simple manner that the NH EOM with $n$HO with $n>1$ is
non ergodic if the $n$HO\ is identical. This applies not only to the original
NH system but also for NH type systems, which includes a number of kinds of
EOM such as the NHC equation. The identical condition means that all masses
and the spring constants are identical for all the degrees of freedom. With or
without this condition, symmetry of the system may differ much and the
dynamics of the system can differ quantitatively~\cite{Nose1993}. We also
demonstrate the non ergodicity of NH type with $n$HO with $n>1$\ under a
condition that is slightly extended from the isotropic condition. \ 

The origin of these non ergodicities is underlying symmetries of
multidimensional systems. Here, the original symmetry of the physical system
is the mass spectrum and potential energy function, and they are reflected
into the Newtonian EOM; for example, an interchange $(x_{1},\ldots,x_{i}%
\ldots,x_{j},\ldots,x_{n})\mapsto(x_{1},\ldots,x_{j}\ldots,x_{i},\ldots
,x_{n})$\ becomes one of the symmetries if the system is isotropic. Even if
the EOM\ that is considered to be ergodic for $1$HO, it should be non ergodic
for $n$HO with $n>1$\ (under an isotropic or isotropic-like condition) due to
the existing symmetry, as long as the target EOM has a certain structure
originated from the NH EOM. A number of the thermostat EOM share with this
structure, and the NHC is not the exception, for example. Multidimensional
thermostat EOM is faced with the symmetry and a number of EOM can not be free
from the symmetry, which leads to the nonergodicity in a certain condition.
Hence, the origin of the non ergodicity completely differ between the $1$HO
and $n$HO with $n>1$. These issues will be clarified in the current study. \ 

We further propose a method to recover the ergodicity even for the idealized
model of $n$HO with $n>1$. For this, we first modify the density dynamics (DD)
scheme~\cite{FN2002}, which has been developed to produce an arbitrary
phase-space density. The DD is described by $\left(  x,p,\zeta\right)
$\ where $\zeta$\ is an additional scalar variable as in the NH EOM and plays
a role to control the dynamics to yield the arbitrarily given phase-space
density. We then obtain a new EOM (we call this splitting DD), which utilizes
a vertorized $\zeta$\ and splits the role of the original additional variable
to directly act on each degree of freedom. In general, the splitting DD can be
free from the symmetry discussed above. Next, we apply the BG density, for the
phase-space density, into the splitting DD to produce the BG distribution and
have a new NH form EOM, which we call splitting NH EOM. This new EOM can also
be free from the symmetry through the splitting feature, even if $K(p)$
and$\ U(x)$, the source of the symmetry, are introduced.  

One of the key issues to avoid the problem originated from the symmetry and to
reach the ergodicity in the splitting NH is an extension of the original
Nos\'{e} mass\ parameter $Q$. Nos\'{e} mass is not a scalar but a matrix. This
extension can be viewed as natural in that the density of $\zeta$\ is based on
a quadratic form. Such a matrix approach would be simple, and only a few
discussions have been done~\cite{Samoletov}. The simplicity of this idea might
be the reason that it is has not paid much attention. However, we show that
this idea gives a higher cost performance than supposed to avoid the nonergodicity.

The splitting NH should not be restricted in the applications to the $n$HO
model system. As well as short-rage vibrational interactions, many part of
long range interactions of particles in classical physical system can be
approximated by harmonic interactions around their equilibriums, which can
then be mimicked by $n$HO interactions. More directly, e.g., representations
of the physical system as\ a set of harmonic oscillators in normal mode study
well describe the feature of a biomolecular system. It is discussed that
perturbations by a chemical reaction via enzyme or by a docking of a medicinal
molecule excite normal modes~\cite{Zheng,Dobbins}. These excited normal modes
allow an oscillator approximation such as elastic network model ~\cite{Tirion}%
. Conversely, intramolecular vibrational energy can be transferred from a
given normal mode~\cite{Moritsugu2000}.

Thus, the ability to enhance the phase-space sampling and recover the
ergodicity for the $n$HO with $n>1$\ is not limited to the theoretical
interest but expected to work well in these realistic applications. To solve
the intrinsic problem by breaking the symmetry should truly enhance the
phase-space sampling and reach the equilibrium. We confirmed numerically the
ergodic properties in $n$HO with $n=2,3$ for the splitting NH. \ 

After briefly reviewing thermostat EOM in Section~\ref{Equations of motion},
we demonstrate the nonergodicity for the $n$HO model with $n>1$ in
Section~\ref{Non ergodicity for isotropic oscillator system}. The symmetries
are explicitly discussed in Appendix. In
Section~\ref{Splitting density dynamics}, we provide the splitting DD EOM for
the basis of remedy against the nonergodicity. In
Section~\ref{Splitting Nose-Hoover method}, we apply this to obtain a new NH
form EOM\ to generate the BG distribution and solve the problem. We present
numerical studies in Section~\ref{Numerics}\ to illustrate the nonergodicity
for the conventional EOM and the ergodicity for the new EOM.

\section{Equations of motion\label{Equations of motion}}

Our target dynamical system can be represented as the following ODE:
\begin{equation}
\left.
\begin{array}
[c]{l}%
\dot{x}=\mathbf{M}^{-1}p\in\mathbb{R}^{n},\\
\dot{p}=F(x)+\lambda\left(  \omega\right)  \,p\in\mathbb{R}^{n},\\
\dot{\zeta}=\Lambda(\omega)\in\mathbb{R}^{m},
\end{array}
\right\}  \label{target EOM}%
\end{equation}
where $x\equiv(x_{1},\ldots,x_{n})\in D\subset\mathbb{R}^{n}$ and
$p\equiv(p_{1},\ldots,p_{n})\in\mathbb{R}^{n}$ are atomic coordinates and
momenta of a physical system of $n$\ degrees of freedom; $F$\ represents a
force, which is a $C^{1}$\ function defined on a domain $D;$ $\mathbf{M}%
$\ represents the mass parameters, which is a\ symmetric, positive-definite
square matrix of size $n$ over $\mathbb{R}$.\ Along with these quantities
associated with Newtonian EOM, $\zeta\in\mathbb{R}^{m}$\ is an additional
dynamical variable, relating to a notion of frictional coefficient or
thermostat. Thus the phase space is $\Omega:=D\times\mathbb{R}^{n}%
\times\mathbb{R}^{m}\subset\mathbb{R}^{N}$ with $N\equiv2n+m$, and the
phase-space point is represented as $\omega\equiv\left(  x,p,\zeta\right)
\in\Omega$.\ To the physical system, a $C^{1}$\ function $\lambda$%
\ $:\Omega\rightarrow\mathbb{R}$ provides $-\lambda\left(  \omega\right)  $,
which can be viewed as a dynamical frictional \textquotedblleft
coefficient\textquotedblright\ and essentially depends on the additional
variable $\zeta\in\mathbb{R}^{m}$. The time development of $\zeta$\ is
described by $\Lambda:\Omega\rightarrow\mathbb{R}^{m}$, which is of class
$C^{1}$. The functions $\lambda$\ and $\Lambda$\ may contain potential energy
$U(x)\in\mathbb{R}$, wherein $F=-\nabla U$, and kinetic energy $K(p)\equiv
\frac{1}{2}(p\,|\,\mathbf{M}^{-1}p)\mathbb{=}\frac{1}{2}\sum_{i,j=1}%
^{n}\mathrm{M}_{ij}^{-1}p_{i}p_{j}$ of the physical system .

We give several examples that fall into the form of EOM (\ref{target EOM}).

\begin{example}
\label{NH}$\lambda\left(  \omega\right)  \equiv\zeta/Q\in\mathbb{R}$ and
$\Lambda(\omega)\equiv2K(p)-nk_{\text{B}}T_{\mathrm{ex}}$ with $m\equiv1$,
where $Q>0$\ is\ a parameter (often called as Nos\'{e}'s mass) give the the NH
equation~\cite{N-H eq(1),N-H eq(2)}:
\begin{equation}
\left.
\begin{array}
[c]{l}%
\dot{x}=\mathbf{M}^{-1}p\in\mathbb{R}^{n},\\
\dot{p}=F(x)-(\zeta/Q)\,p\in\mathbb{R}^{n},\\
\dot{\zeta}=2K(p)-nk_{\text{B}}T_{\mathrm{ex}}\in\mathbb{R}^{1}.
\end{array}
\right\}  \label{NH eq}%
\end{equation}
This is introduced to control the physical system temperature
$2K(p)/nk_{\text{B}}$\ ($k_{\text{B}}$ is Boltzmann's constant) into$\ $a
target temperature $T_{\mathrm{ex}}>0$\ and can yield the canonical ensemble.
\end{example}

\begin{example}
\label{NHC} $\lambda\left(  \omega\right)  \equiv\zeta_{1}/Q_{1}\in\mathbb{R}$
and $\Lambda(\omega)\equiv\left(  G_{1}(\omega)-\zeta_{1}\zeta_{2}%
/Q_{2},\ldots,G_{j}(\omega)-\zeta_{j}\zeta_{j+1}/Q_{j+1},\ldots,G_{m}%
(\omega)\right)  $, with
\begin{subequations}
\label{Gj}%
\begin{align*}
G_{1}(\omega)  &  \equiv2K(p)-nk_{\text{B}}T_{\mathrm{ex}}\in\mathbb{R},\\
G_{j}(\omega)  &  \equiv\zeta_{j-1}^{2}/Q_{j-1}-k_{\text{B}}T_{\mathrm{ex}}%
\in\mathbb{R},\ \ j=2,\ldots,m,
\end{align*}
where $Q_{1}\ldots,Q_{m}>0$\ are parameters, give the NH chain (NHC)
equation~\cite{NHC}:
\end{subequations}
\[
\left.
\begin{array}
[c]{l}%
\dot{x}=\mathbf{M}^{-1}p\in\mathbb{R}^{n},\\
\dot{p}=F(x)-(\zeta_{1}/Q_{1})\,p\in\mathbb{R}^{n},\\
\dot{\zeta}_{j}=G_{j}(\omega)-\zeta_{j}\zeta_{j+1}/Q_{j+1}\in\mathbb{R}%
,\ \ j=1,\ldots,m-1,\\
\dot{\zeta}_{m}=G_{m}(\omega)\in\mathbb{R}.
\end{array}
\right\}
\]
This can be viewed as an extended form of the NH EOM.
\end{example}

\begin{example}
The kinetic moments method~\cite{Holian,Harish}, which can be represented by
\[
\left.
\begin{array}
[c]{l}%
\dot{x}=\mathbf{M}^{-1}p\in\mathbb{R}^{n},\\
\dot{p}=F(x)-(\zeta_{1}/Q_{1}+\hat{K}(p)\zeta_{2}/Q_{2})\,p\in\mathbb{R}%
^{n},\\
\dot{\zeta}_{1}=\hat{K}(p)-1\in\mathbb{R},\\
\dot{\zeta}_{2}=\hat{K}(p)(\hat{K}(p)-(n+2)/n)\in\mathbb{R},
\end{array}
\right\}
\]
where $\hat{K}(p)\equiv2K(p)/nk_{\text{B}}T_{\mathrm{ex}}$, becomes an example
of (\ref{target EOM}).
\end{example}

\begin{example}
\label{GGMT}The generalized Gaussian moment thermostatting method~\cite{GGMT}
represented by
\[
\left.
\begin{array}
[c]{l}%
\dot{x}=\mathbf{M}^{-1}p\in\mathbb{R}^{n},\\
\dot{p}=F(x)-\left(  \sum\limits_{j=1}^{m}\sum\limits_{k=1}^{j}a_{k-1}%
(2K(p))^{k-1}(k_{\text{B}}T_{\mathrm{ex}})^{j-k}\zeta_{j}/Q_{j}\right)
\,p\in\mathbb{R}^{n},\\
\dot{\zeta}_{j}=a_{j-1}(2K(p))^{j}-n(k_{\text{B}}T_{\mathrm{ex}})^{j}%
\in\mathbb{R},\ \ j=1,\ldots,m,
\end{array}
\right\}
\]
with $a_{j}\equiv\prod_{k=1}^{j}(n+2k)^{-1}$, becomes also an example.
\end{example}

There can be found more examples in thermostat methods; see e.g.,
Refs.~\cite{Hoover books,Nose
prog,Hunenberger,Jepps,Ezra,Fukuda2016,nonequilibrium work
theorems,Samoletov,Dettmann,Collins,Krajinak} for details on thermostat
methods and their development. Other examples include e.g., the coupled NH
equations of motion, which is introduced to fluctuate the temperature of the
heat bath for the physical system~\cite{FM1,FM2017}.

\section{Non ergodicity for isotropic oscillator system
\label{Non ergodicity for isotropic oscillator system}}

The target ODE~(\ref{target EOM}) can be represented as
\begin{equation}
\dot{\omega}=X(\omega),
\end{equation}
where $X$ becomes a $C^{1}$ vector field defined on a domain $\Omega$\ of
$\mathbb{R}^{N}$. Assuming the completeness of $X$, we let $\{T_{t}%
\}\equiv\{T_{t}:\Omega\rightarrow\Omega$ $|$ $t\in\mathbb{R}\}$ be the flow
generated by the field $X$. Consider the case where we have an invariant
measure $\mu$\ of the flow $\{T_{t}\}$:%
\[
\forall t\in\mathbb{R},\text{ }\forall A\in\mathcal{L}_{N}^{\Omega},\text{
}\mu(T_{t}^{-1}A)=\mu(A),
\]
where $\mathcal{L}_{N}^{\Omega}\equiv\mathcal{L}_{N}\cap\Omega$ with
$\mathcal{L}_{N}$\ being the Lebesgue measurable sets on $\mathbb{R}^{N}$. We
assume that $0<\mu(\Omega)<\infty$ and that
\begin{equation}
\mu\sim l_{N},\label{equiv of measures}%
\end{equation}
i.e., $\mu$\ and the Lebesgue measure $l_{N}$\ of $\mathbb{R}^{N}$ are
absolutely continuous each other. For example, on Examples~\ref{NH}
and~\ref{NHC}, we have an invariant measure defined by%
\begin{equation}
\mathcal{L}_{N}^{\Omega}\rightarrow\lbrack0,\infty),\text{ }A\mapsto
\mu(A):=\int_{A}\rho dl_{N}\label{myu}%
\end{equation}
with a (strictly positive and measurable) density $\rho:\Omega\rightarrow
\mathbb{R}$, satisfying the Liouville equation~\cite{comment,FM2}%
\begin{equation}
\operatorname{div}\rho X=0.\label{Liouville equation}%
\end{equation}

A subset $A\in\mathcal{L}_{N}^{\Omega}$ is said to be an invariant set if
$T_{t}^{-1}(A)=A$ for all $t\in\mathbb{R}$. The ergodicity for the measure
space $(\Omega,\mathcal{L}_{N}^{\Omega},\mu)$\ with the flow $\{T_{t}\}$ holds
if any invariant set $A$\ is trivial, i.e., $\mu(A)=0$ or $\mu(\Omega
\backslash A)=0$. In other word, if we have an invariant set $A$\ such that
\begin{equation}
\mu(A)>0\text{ and }\mu(\Omega\backslash A)>0,\label{non ergode condi}%
\end{equation}
then the ergodicity does not hold.

We show $\{T_{t}\}$ is not ergodic for a harmonic oscillator system with
$n>1$. The condition for $n>1$\ is essential to the current discussion. The
non ergodicity for a harmonic oscillator system with $n=1$ is demonstrated in
Ref.~\cite{Legoll}. Here, an isotropic condition in a harmonic oscillator
system in Eq.~(\ref{target EOM}) is described by
\begin{equation}
\mathbf{M=}m\mathbf{1}_{n}\text{ and }F(x)=-kx\in\mathbb{R}^{n},
\label{isotropic condi}%
\end{equation}
where $\mathbf{1}_{n}$\ is the unit matrix of size $n$, and $m$\ and $k$ are
strictly positive parameters (representing a mass and spring constant,
respectively); i.e., we have
\begin{equation}
\left.
\begin{array}
[c]{l}%
\dot{x}=m^{-1}p\in\mathbb{R}^{n},\\
\dot{p}=-kx+\lambda\left(  \omega\right)  \,p\in\mathbb{R}^{n},\\
\dot{\zeta}=\Lambda(\omega)\in\mathbb{R}^{m}.
\end{array}
\right\}  \label{identical nHO EOM}%
\end{equation}
Defining a map $\gamma:\Omega\rightarrow T^{2}(\mathbb{R}^{n})\cong%
\mathbb{R}^{n^{2}}$ by
\begin{align}
\gamma(\omega)  &  :=x\wedge p\nonumber\\
&  =\frac{1}{2}(x\otimes p-p\otimes x), \label{def of gamma}%
\end{align}
we show

\begin{lemma}
\label{L1}$\Upsilon_{ij}^{\pm}\equiv\{\omega\in\Omega$ $|$ $\gamma_{ij}%
(\omega)\gtrless0\}$ is an invariant space for the flow of
Eq.~(\ref{identical nHO EOM}) for $i,j=1,\ldots,n$.
\end{lemma}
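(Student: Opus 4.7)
The plan is to show that the quantity $\gamma_{ij}$ is a multiplicative constant of motion along the flow, so that its sign cannot change. Concretely, I would compute the time derivative of $\gamma_{ij}(\omega(t)) = \tfrac{1}{2}(x_i p_j - x_j p_i)$ along a solution of the isotropic system (\ref{identical nHO EOM}) and verify that
\begin{equation*}
\dot{\gamma}_{ij} = \lambda(\omega)\,\gamma_{ij}.
\end{equation*}
The key cancellations are that the $\dot{x}$-terms cancel because $\dot{x} = m^{-1}p$ makes $\dot{x}_i p_j - \dot{x}_j p_i$ symmetric in the indices, and the spring-force terms cancel because $-kx$ contributes $-k(x_i x_j - x_j x_i) = 0$; only the friction-like contribution $\lambda(\omega)(x_i p_j - x_j p_i)$ survives. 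This is where the isotropic condition $\mathbf{M} = m\mathbf{1}_n$ and $F(x) = -kx$ is used essentially: it is what makes the $\dot{p}_j$ in the $i$-slot match against the $\dot{p}_i$ in the $j$-slot so that everything except the $\lambda$-term drops out.

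Once $\dot{\gamma}_{ij} = \lambda(\omega)\gamma_{ij}$ is established, I would treat this as a scalar linear ODE along an orbit. Fix $\omega_0 \in \Omega$ and set $g(t) := \gamma_{ij}(T_t \omega_0)$ and $\lambda(t) := \lambda(T_t \omega_0)$, which is continuous in $t$ because $T_t$ is $C^1$ and $\lambda$ is $C^1$. Then $\dot g = \lambda(t) g$, so
\begin{equation*}
\gamma_{ij}(T_t \omega_0) = \gamma_{ij}(\omega_0)\exp\!\left(\int_0^t \lambda(T_s \omega_0)\,ds\right).
\end{equation*}
The exponential factor is strictly positive, so $\operatorname{sgn}\gamma_{ij}(T_t\omega_0) = \operatorname{sgn}\gamma_{ij}(\omega_0)$ for all $t \in \mathbb{R}$.

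From this sign-preservation I would conclude the invariance of $\Upsilon_{ij}^{+}$ and $\Upsilon_{ij}^{-}$: if $\omega_0 \in \Upsilon_{ij}^{+}$, then $\gamma_{ij}(\omega_0) > 0$ implies $\gamma_{ij}(T_t\omega_0) > 0$, so $T_t\omega_0 \in \Upsilon_{ij}^{+}$, giving $T_t(\Upsilon_{ij}^{+}) \subset \Upsilon_{ij}^{+}$; applying the same argument to $T_{-t}$ yields the reverse inclusion, hence $T_t^{-1}(\Upsilon_{ij}^{+}) = \Upsilon_{ij}^{+}$. The argument for $\Upsilon_{ij}^{-}$ is identical. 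Finally, $\Upsilon_{ij}^{\pm}$ are Lebesgue measurable (preimages of open sets under the continuous function $\gamma_{ij}$), so they lie in $\mathcal{L}_{N}^{\Omega}$.

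I do not anticipate a real obstacle: the only nontrivial point is the algebraic identity $\dot{\gamma}_{ij} = \lambda(\omega)\gamma_{ij}$, and it is essentially a one-line computation once one notices that the only structural input needed from the EOM is that $\dot{p}$ is a linear combination of $x$ and $p$ with scalar (rather than index-dependent) coefficients, which is exactly what (\ref{isotropic condi}) guarantees. This also clarifies what goes wrong in the non-isotropic case: if $\mathbf{M}$ is not a scalar multiple of the identity or if $F$ is not proportional to $x$, the cross-terms no longer cancel and $\gamma_{ij}$ loses its multiplicative-invariance property.
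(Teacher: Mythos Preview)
Your proposal is correct and follows essentially the same approach as the paper: both compute the derivative of $\gamma$ along a solution, obtain $\dot\gamma_{ij}=\lambda(\omega)\gamma_{ij}$, integrate to the exponential formula $\gamma_{ij}(\omega(t))=\exp\!\bigl(\int_0^t\lambda(\omega(s))\,ds\bigr)\gamma_{ij}(\omega(0))$, and read off invariance from the strict positivity of the exponential factor. The only cosmetic differences are that the paper performs the computation at the wedge-product level $x\wedge p$ (handling all $(i,j)$ at once) and first records the invariance of $\Upsilon_{ij}^{0}$ before passing to $\Upsilon_{ij}^{\pm}$, whereas you work componentwise and go straight to sign preservation.
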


\begin{proof}
For any solution of ODE~(\ref{identical nHO EOM}), $\varphi:\mathbb{R\supset
}J$ $\rightarrow\Omega,t\mapsto\varphi(t)\equiv\left(  x(t),p(t),\zeta
(t)\right)  ,$ we have
\begin{align*}
D(\gamma\circ\varphi)(t)  &  =Dx(t)\wedge p(t)+x(t)\wedge Dp(t)\\
&  =m^{-1}p(t)\wedge p(t)+x(t)\wedge(-kx(t)+\lambda\left(  \varphi(t)\right)
p(t))\\
&  =x(t)\wedge\lambda(\varphi(t))p(t)\\
&  =\lambda(\varphi(t))(\gamma\circ\varphi)(t)
\end{align*}
for all $t\in J$, which is an open interval (that may be $\mathbb{R}$)
involving $0$. Thus
\[
(\gamma\circ\varphi)(t)=\exp\left(  \int_{0}^{t}\lambda(\varphi(s))ds\right)
(\gamma\circ\varphi)(0)\in\mathbb{R}^{n^{2}}%
\]
for any $t\in J$, implying that $(\gamma\circ\varphi)(0)=0\ $reads as
$(\gamma\circ\varphi)(t)=0\ $for all $t$.$\ $Hence%
\[
\Upsilon_{ij}^{0}\equiv\{\omega\in\Omega|\gamma_{ij}(\omega)=0\}
\]
is an invariant space for $i,j=1,\ldots,n$. For any $i$ and $j$, the
continuity of $\gamma_{ij}$\ indicates that $\Upsilon_{ij}^{+}=\{\gamma
_{ij}>0\}$ and $\Upsilon_{ij}^{-}=\{\gamma_{ij}<0\}$ are also invariant. \ 
\end{proof}

Thus, we get

\begin{proposition}
\label{nonergodicity for nHO}The flow $\{T_{t}\}$\ of
ODE~(\ref{identical nHO EOM}) is not ergodic with respect to $(\Omega
,\mathcal{L}_{N}^{\Omega},\mu)$.
\end{proposition}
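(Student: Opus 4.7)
The plan is to exhibit an explicit invariant set satisfying condition (\ref{non ergode condi}); Lemma~\ref{L1} already provides the invariance, so the only remaining work is to verify the two positive-measure requirements.

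First I would fix a pair of distinct indices $i\neq j$ in $\{1,\ldots,n\}$, which is possible precisely because $n>1$ (this is where the hypothesis enters). I would then consider the candidate invariant set $A:=\Upsilon_{ij}^{+}=\{\omega\in\Omega\mid\gamma_{ij}(\omega)>0\}$, where $\gamma_{ij}(\omega)=\tfrac{1}{2}(x_{i}p_{j}-x_{j}p_{i})$ by the definition (\ref{def of gamma}). By Lemma~\ref{L1}, $A$ is invariant under $\{T_t\}$, so it suffices to check $\mu(A)>0$ and $\mu(\Omega\setminus A)>0$.

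Next I would observe that, since $F(x)=-kx$ is defined on all of $\mathbb{R}^{n}$ in the isotropic setting, $\Omega=\mathbb{R}^{n}\times\mathbb{R}^{n}\times\mathbb{R}^{m}$ (or contains an open neighborhood of the origin in each factor), and the continuous map $\gamma_{ij}\colon\Omega\to\mathbb{R}$ is a nontrivial polynomial in $(x,p)$. Therefore both $\Upsilon_{ij}^{+}$ and $\Upsilon_{ij}^{-}$ are nonempty open subsets of $\Omega$: one may verify this by evaluating $\gamma_{ij}$ at explicit points such as $(x_{i},p_{j})=(1,\pm 1)$ with the remaining coordinates zero. Openness and nonemptiness imply strictly positive Lebesgue measure, i.e.\ $l_{N}(\Upsilon_{ij}^{\pm})>0$.

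Finally I would invoke the absolute continuity assumption (\ref{equiv of measures}), which yields $\mu(\Upsilon_{ij}^{+})>0$ and $\mu(\Upsilon_{ij}^{-})>0$; since $\Upsilon_{ij}^{-}\subset\Omega\setminus\Upsilon_{ij}^{+}$, we also get $\mu(\Omega\setminus A)\geq\mu(\Upsilon_{ij}^{-})>0$. Thus condition (\ref{non ergode condi}) is satisfied, proving non-ergodicity. There is essentially no obstacle here beyond bookkeeping, because Lemma~\ref{L1} has absorbed the dynamical content; the only thing one must be careful about is to use $i\neq j$ (so that $\gamma_{ij}$ is genuinely sign-indefinite on $\Omega$) and to cite (\ref{equiv of measures}) to transfer positivity from Lebesgue measure to $\mu$.
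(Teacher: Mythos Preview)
Your proposal is correct and follows essentially the same route as the paper's own proof: pick $i\neq j$ (using $n>1$), invoke Lemma~\ref{L1} for invariance of $\Upsilon_{ij}^{+}$, note that $\Upsilon_{ij}^{\pm}$ are nonempty open sets, and then use assumption~(\ref{equiv of measures}) to conclude $\mu(\Upsilon_{ij}^{+})>0$ and $\mu(\Omega\setminus\Upsilon_{ij}^{+})\geq\mu(\Upsilon_{ij}^{-})>0$. Your write-up is in fact slightly more explicit than the paper's (you spell out the nonemptiness via concrete points and the passage from Lebesgue to $\mu$), but the argument is identical.
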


\begin{proof}
Choose any $i$,$j\in\{1,\ldots,n\}$ such that $i\neq j$ (recall $n>1$). We
have $\Upsilon_{ij}^{\pm}\neq\emptyset$. From
assumption~(\ref{equiv of measures}) and the fact that $\Upsilon_{ij}^{\pm}%
$\ becomes a nonempty open set of $\mathbb{R}^{N}$, we see $\mu(\Upsilon
_{ij}^{+})>0$ and $\mu(\Omega\backslash \Upsilon_{ij}^{+})\geq\mu(\Upsilon
_{ij}^{-})>0$.
\end{proof}

The above discussion shows the non ergodicity for the harmonic oscillator
system with $n>1$, by showing the existence of an invariant set $\Upsilon
_{ij}^{+}$ that has a desired property (for this purpose, finding one such a
set is relevant but the existence of many subsets such as $\cup_{i,j}%
\Upsilon_{ij}^{\pm}$\ and $\cap_{i,j}\Upsilon_{ij}^{\pm}$\ is less important).
However, this discussion lacks explanations why and how the invariant set
$\Upsilon_{ij}^{\pm}$\ arises. In Appendix, we show that symmetry group
$O(N)$\ acts on ODE~(\ref{identical nHO EOM}) and its invariant set splits the
total phase space $\Omega$ producing invariant sets that correspond to
$\Upsilon_{ij}^{\pm}$. It should be noted that finding just one solution (or
countably many solutions) confined in a certain subset $B$\ does not
necessarily indicate the non ergodicity, since $B$\ may be a null set.

The above discussion on the isotropic harmonic oscillator\ case can be
generalized to a harmonic oscillator\ with $F(x)=-\mathbf{K}x\in\mathbb{R}%
^{n}$ with $\mathbf{K}\in\mathrm{End}\mathbb{R}^{n}$\ under the condition that
$\mathbf{M}^{-1}\mathbf{K}$\ is symmetric, positive definite, and\ degenerate,
i.e., there exist eigenvalues such that $\lambda_{i}=\lambda_{j}$ for $i\neq
j$. Proposition~\ref{nonergodicity for nHO} in this case is proven by
observing that $\tilde{\Upsilon}_{ij}^{0}\equiv\{\omega\in\Omega|\tilde
{\gamma}_{ij}(\omega)=0\}$ becomes an invariant space, where $\tilde{\gamma
}(\omega)\equiv VGx\wedge G\mathbf{M}^{-1}p$\ with $V\equiv G\mathbf{M}%
^{-1}\mathbf{K}G^{-1}$\ being the diagonal matrix for a certain $G\in$GL$(n)$,
and by observing that $\tilde{\Upsilon}_{ij}^{\pm}\equiv\{\omega\in
\Omega|\tilde{\gamma}_{ij}(\omega)\gtrless0\}$ becomes a nonempty open
invariant set.

\section{Splitting density dynamics\label{Splitting density dynamics}}

To attain the ergodicity for symmetric systems such as the $n$HO, we propose a
generalized version of the density dynamics (DD)~\cite{FN2002}. The original
version of the DD is defined by
\begin{equation}
\left.
\begin{array}
[c]{l}%
\dot{x}_{i}=D_{p_{i}}\Theta(\omega)\in\mathbb{R},\ \ \ i=1,\ldots,n,\\
\dot{p}_{i}=-D_{x_{i}}\Theta(\omega)-D_{\zeta}\Theta(\omega)\ p_{i}%
\in\mathbb{R},\ \ \ i=1,\ldots,n,\\
\dot{\zeta}=\sum\limits_{i=1}^{n}D_{p_{i}}\Theta(\omega)\ p_{i}-n\beta^{-1}%
\in\mathbb{R},
\end{array}
\right\}  \label{DD eq}%
\end{equation}
with $\Theta=-\beta^{-1}\ln\rho$, where $\rho:\Omega\rightarrow\mathbb{R}$ is
an arbitrarily given density function, i.e., $\rho$ is a function that is of
class $C^{2}$, strictly positive, and integrable. $\zeta\in\mathbb{R}$ is a
dynamical variable and $\beta>0$\ is an arbitrary parameter.\ ODE~(\ref{DD eq}%
) is designed so as to satisfy the Liouville
equation~(\ref{Liouville equation}) for the density $\rho$, and it involves
the NH EOM (\ref{NH eq}), viz., the NH\ is recovered if we set $\rho
=\rho_{\text{NH}}$, where
\begin{equation}
\rho_{\text{NH}}\left(  \omega\right)  \equiv\exp\left[  -\beta\left(
U(x)+K(p)+\frac{1}{2Q}\zeta^{2}\right)  \right]  \label{NH ro}%
\end{equation}
with $\beta=1/k_{\text{B}}T_{\mathrm{ex}}$.

Our generalization for Eq.~(\ref{DD eq}) is based on (i) an extension of the
additional scalar variable $\zeta\in\mathbb{R}$\ to a vector variable
$\zeta\equiv(\zeta_{1},\ldots,\zeta_{n})\in\mathbb{R}^{n}$, and (ii)
$\zeta_{i}$\ 's EOM that is a natural decomposition of the third equation of
~(\ref{DD eq}). Namely, a generalized DD, which we call splitting density
dynamics, is
\begin{equation}
\left.
\begin{array}
[c]{l}%
\dot{x}_{i}=D_{p_{i}}\Theta(\omega)\in\mathbb{R},\ \ \ i=1,\ldots,n,\\
\dot{p}_{i}=-D_{x_{i}}\Theta(\omega)-D_{\zeta_{i}}\Theta(\omega)\ p_{i}%
\in\mathbb{R},\ \ \ i=1,\ldots,n,\\
\dot{\zeta}_{i}=D_{p_{i}}\Theta(\omega)\ p_{i}-\beta^{-1}\in\mathbb{R}%
,\ \ \ i=1,\ldots,n.
\end{array}
\right\}  \label{massive DD EOM}%
\end{equation}
As is easily confirmed that the Liouville equation~(\ref{Liouville equation})
holds for any density function $\rho$, this new EOM can be replaced with
Eq.~(\ref{DD eq}). That is, for any $P$-integrable function $g$ on phase space
$\Omega$,%
\begin{align}
\bar{g} &  :=\exists\lim\limits_{\tau\rightarrow\infty}\dfrac{1}{\tau}%
{\displaystyle\int_{0}^{\tau}}
g(T_{t}(\omega))dt\nonumber\\
&  =\int_{\Omega}g\rho\text{ }dl_{N}\left/  \int_{\Omega}\rho dl_{N}\right.
=:\left\langle g\right\rangle \in\mathbb{R}\label{ergodic prop}%
\end{align}
holds with respect to a $P$-almost every initial point $\omega$, if the flow
$\{T_{t}\}$ is ergodic with respect to the measure $P\equiv\rho dl_{N}$.

Some remarks are made~\cite{FN2002}. First, fixed points for $X,$ which can be
obstructions to the ergodicity, do not exist, as long as $\beta>0$. Second,
$\operatorname*{div}X\neq0$ holds (otherwise $\rho$ becomes an invariant
function and should not be almost everywhere constant, so the system does not
become ergodic), as long as
\begin{equation}
\sum_{i=1}^{n}D_{\zeta_{i}}\rho\neq0 \label{condi of ro  for div}%
\end{equation}
(not identically zero). This is because $\operatorname*{div}X=-\sum_{i=1}%
^{n}D_{\zeta_{i}}\Theta$. Condition~(\ref{condi of ro for div}) is valid in
many cases (see the case later).

\section{Splitting Nos\'{e}-Hoover method utilizing Nos\'{e}-mass matrix
\label{Splitting Nose-Hoover method}}

The meaning of the generalization of (i) and (ii) in
Section~\ref{Splitting density dynamics} will be clearer when we consider the
NH limit. Here, the NH limit is obtained if we set $\rho=\tilde{\rho
}_{\text{BG}}$, where
\begin{equation}
\tilde{\rho}_{\text{BG}}\left(  \omega\right)  \equiv\exp\left[  -\beta\left(
U(x)+K(p)+K_{z}(\zeta)\right)  \right]  \label{new NH ro}%
\end{equation}
with
\begin{equation}
K_{z}(\zeta)\equiv\frac{1}{2}(\zeta\,|\,\mathbf{Q}^{-1}\zeta)=\frac{1}{2}%
\sum_{i,j=1}^{n}\mathrm{Q}_{ij}^{-1}\zeta_{i}\zeta_{j}%
\end{equation}
being a quantity corresponding to the kinetic energy for $\zeta$, and
$\beta=1/k_{\text{B}}T_{\mathrm{ex}}$. The difference from Eq.~(\ref{NH ro})
is to utilize, as well as the vectorized $\zeta\in\mathbb{R}^{n}$, a matrix
form of $\mathbf{Q}$, which is a natural extension of the original scalar
Nos\'{e}'s mass\ $Q$ (recovered when $n=1$, of course). Specifically,
$\mathbf{Q\equiv(}\mathrm{Q}_{ij}\mathbf{)}\in\mathrm{End}\mathbb{R}^{n}%
$\ should be symmetric and positive definite: we should set it symmetric
without loss of generality, considering that the kinetic energy is a quadratic
form; and the positive-definite condition is a natural extension of $Q>0$\ and
is actually required for ensuring the integrability condition of $\rho$. Now,
applying Eq.~(\ref{new NH ro}), the splitting DD~(\ref{massive DD EOM}) turns
out to be
\begin{subequations}
\label{massive NH EOM}%
\begin{align}
\dot{x}_{i}  &  =(\mathbf{M}^{-1}p)_{i}\in\mathbb{R},\ \ \ i=1,\ldots
,n,\label{massive NH EOM 1}\\
\dot{p}_{i}  &  =F_{i}(x)-\tau_{i}\left(  \zeta\right)  \ p_{i}\in
\mathbb{R},\ \ \ i=1,\ldots,n,\label{massive NH EOM 2}\\
\dot{\zeta}_{i}  &  =2K_{i}(p)-k_{\text{B}}T_{\mathrm{ex}}\in\mathbb{R}%
,\ \ \ i=1,\ldots,n, \label{massive NH EOM 3}%
\end{align}
where
\end{subequations}
\begin{align}
\tau_{i}\left(  \zeta\right)   &  \equiv-k_{\text{B}}T_{\mathrm{ex}}%
D_{\zeta_{i}}\ln\tilde{\rho}_{\text{BG}}\left(  \omega\right) \nonumber\\
&  =D_{i}K_{z}(\zeta)\nonumber\\
&  =(\mathbf{Q}^{-1}\zeta)_{i}=\sum_{j=1}^{n}\mathrm{Q}_{ij}^{-1}\zeta_{j}
\label{frictional term of the current method}%
\end{align}
and
\begin{equation}
2K_{i}(p)\equiv(\mathbf{M}^{-1}p)_{i}\ p_{i}.
\end{equation}
Thus, the dynamical frictional \textquotedblleft coefficient\textquotedblright%
\ $\tau_{i}\left(  \zeta\right)  $\ depends on not only one component for
$\zeta$\ (as in the NH and NHC) but also all components $\zeta_{1}%
,\ldots,\zeta_{n}$\ or at least two components $\zeta_{k},\zeta_{l}$\ when we
choose $\mathbf{Q}$ as a non-diagonal matrix.\ This is a motivation of above
(i), and this \textquotedblleft mixing\textquotedblright\ of $\zeta
$\ components will play a part for avoiding the nonergodicity, as detailed
below. Briefly speaking, the fact that the contribution of $\tau_{i}\left(
\zeta\right)  $\ to $\dot{p}_{i}$\ can be different for each $i$ is effective
to break down the isotropic symmetry if it exists in the system.
Equation~(\ref{massive NH EOM 3})\ intends the law of equipartition, i.e., the
expected equilibrium condition, $\dot{\zeta}_{i}\sim0$, should contribute to
the exact relationship of the law $\overline{2K_{i}}=k_{\text{B}%
}T_{\mathrm{ex}}$ for every degree of freedom $i$, which is validated by%
\begin{align}
\overline{2K_{i}}  &  =\left\langle 2K_{i}\right\rangle \nonumber\\
&  \equiv\int_{\Omega}(\mathbf{M}^{-1}p)_{i}\ p_{i}\tilde{\rho}_{\text{BG}%
}(\omega)dl_{N}(\omega)\left/  \int_{\Omega}\tilde{\rho}_{\text{BG}}%
dl_{N}\right. \nonumber\\
&  =k_{\text{B}}T_{\mathrm{ex}} \label{equipartition}%
\end{align}
where the first equation is owing to ~(\ref{ergodic prop}) under the ergodic condition.

Some remarks are made. First note that the EOM ~(\ref{massive NH EOM}) is
still physically natural in the sense that the first equation is exactly same
as that in the Newtonian EOM, and the $i$th component of the frictional force
in Eq.~(\ref{massive NH EOM 2}) is proportional to $p_{i}$ and take a form
$-\tau_{i}\left(  \zeta\right)  \ p_{i}$\ using a scalar quantity $\tau
_{i}\left(  \zeta\right)  \in\mathbb{R}$, which conforms to the conventional
form for the classical dynamics treatment. Second, EOM ~(\ref{massive NH EOM})
can be viewed\ as a generalization of the original NH. This is because, by
setting
\begin{equation}
\mathbf{Q}^{-1}=Q^{-1}\boldsymbol{\text{\b{1}}}\equiv Q^{-1}\left[
\begin{array}
[c]{ccc}%
1 & \cdots & 1\\
\vdots & \ddots & \vdots\\
1 & \cdots & 1
\end{array}
\right]  \in\mathrm{End}\mathbb{R}^{n}\label{uniform matrix}%
\end{equation}
with a scalar input $Q^{-1}>0$\ and a matrix $\mathbf{\text{\b{1}}}$\ whose
every element is unity (not the unit matrix), we recover the original NH
EOM\ via defining a redesigned scalar variable $\zeta\equiv\sum_{i=1}^{n}%
\zeta_{i}\in\mathbb{R}$, which plays the same role in the original NH
variable. In this sense, we will call Eq.~(\ref{massive NH EOM}) splitting
Nos\'{e}-Hoover EOM. Third, as can also be seen from Eq.~(\ref{equipartition}%
), ODE~(\ref{massive NH EOM})\ generates the BG distribution at temperature
$T_{\mathrm{ex}}$\ for physical quantities under the ergodic condition. This
is clearly seen by separately rewriting $\tilde{\rho}_{\text{BG}}$ such that
\begin{align*}
\tilde{\rho}_{\text{BG}}\left(  \omega\right)   &  =\rho_{\text{BG}}\left(
x,p\right)  \rho_{\text{Z}}\left(  \zeta\right)  ,\\
\rho_{\text{BG}}\left(  x,p\right)   &  \equiv\exp\left[  -\frac
{1}{k_{\text{B}}T_{\mathrm{ex}}}\left(  U(x)+K(p)\right)  \right]  ,\\
\rho_{\text{Z}}\left(  \zeta\right)   &  \equiv\exp\left[  -\frac
{1}{k_{\text{B}}T_{\mathrm{ex}}}K_{z}(\zeta)\right]  ,
\end{align*}
Eq.~(\ref{ergodic prop}) implies the relation $\bar{f}=\left\langle
f\right\rangle _{\text{BG}}$ with respect to a physical quantity
$f:D\times\mathbb{R}^{n}\rightarrow\mathbb{R}$\ such that
\begin{align}
\bar{f} &  \equiv\lim\limits_{\tau\rightarrow\infty}\dfrac{1}{\tau}%
{\displaystyle\int_{0}^{\tau}}
f(x(t),p(t))dt\nonumber\\
&  =\int_{\Omega}f(x,p)\tilde{\rho}_{\text{BG}}(\omega)\text{ }dl_{N}\left/
\int_{\Omega}\tilde{\rho}_{\text{BG}}dl_{N}\right.  \in\mathbb{R}\nonumber\\
&  =\frac{\int_{D\times\mathbb{R}^{n}}f(x,p)\rho_{\text{BG}}\left(
x,p\right)  dxdp}{\int_{D\times\mathbb{R}^{n}}\rho_{\text{BG}}\left(
x,p\right)  dxdp}=:\left\langle f\right\rangle _{\text{BG}}%
.\label{PS variable average}%
\end{align}
Finally, note that Condition~(\ref{condi of ro for div})\ is valid for
$\rho=\tilde{\rho}_{\text{BG}}$ since $\sum_{i=1}^{n}D_{\zeta_{i}}%
\Theta\left(  \zeta\right)  =\sum_{i,j=1}^{n}\mathrm{Q}_{ij}^{-1}\zeta_{j}$.

Note that matrix form of $\mathbf{Q}$\ has been utilized before in literature.
Samoletov et al.~\cite{Samoletov} have used it in their development of
configurational thermostats, which are thermostat equations in configuration
space. Their matrix $\mathbf{Q}$\ is of size of $3$, which corresponds to the
additional $3$-vector introduced for their purpose in order to control
$x$\ instead of $p$. They also utilized a diagonal form (uncoupled case) for
$\mathbf{Q}$ in considering physical necessities,\ though admitted the
possibility of coupled case. In these respects, their approach and ours are
different. Leimkuhler et al.~\cite{Leimkuhler}\ has considered a NH type EOM
with introducing random noise to improve ergodicity. The frictional term they
treated is of form of $\lambda\left(  \omega\right)  =\zeta\mathbf{1}%
_{n}+\mathbf{M\,S}(t,\zeta)\in\mathrm{End}\mathbb{R}^{n}$, where $\zeta$\ is a
scalar as in the original NH, and $\mathbf{S}(t,\zeta)$\ is an anti-symmetric
matrix depending on $\zeta$.\ Thus, it is different from our
term~(\ref{frictional term of the current method}). They demonstrated the
ergodicity for their stochastic dynamics with harmonic oscillators, relating
as a counterpart to our statement of the nonergodicity for ODE.

\subsection{On the choice of matrix $\mathbf{Q}$}

We describe how the matrix $\mathbf{Q}$ defines the the distribution of
$\zeta$ and how we should set $\mathbf{Q}$\ for effectively realize the ergodicity.

\subsubsection{$\mathbf{Q}$ determines $\zeta$'s distribution}

In contrast to the (marginal) distribution of $\left(  x,p\right)
$\ described by the RHS of Eq.~(\ref{PS variable average}), which is the BG
distribution, the distribution of $\zeta$ is characterized by the matrix
$\mathbf{Q}$\ and is described by $P_{\zeta}\equiv P\pi_{\zeta}^{-1}%
:\mathcal{B}^{n}\rightarrow\mathbb{R}$ such that
\begin{align*}
B &  \mapsto P(\pi_{\zeta}^{-1}(B))\\
&  =\int_{D\times\mathbb{R}^{n}\times B}\tilde{\rho}_{\text{BG}}dl_{N}\left/
\int_{\Omega}\tilde{\rho}_{\text{BG}}dl_{N}\right.  \\
&  =\int_{B}\rho_{\text{Z}}\left(  \zeta\right)  dl_{n}\left(  \zeta\right)
\left/  \int_{\mathbb{R}^{n}}\rho_{\text{Z}}dl_{n}\right.  \\
&  =N_{z}\int_{B}\exp\left[  -\frac{1}{2k_{\text{B}}T_{\mathrm{ex}}}%
(\zeta\,|\,\mathbf{Q}^{-1}\zeta)\right]  d\zeta,
\end{align*}
where $N_{z}\equiv\lbrack(2\pi k_{\text{B}}T_{\mathrm{ex}})^{n}\det
\mathbf{Q}]^{-1/2}$. Namely, $\zeta\in\mathbb{R}^{n}$\ is distributed
ellipsoidally\ around the origin. Note that, instead of directly using
$P_{\zeta}$, it is often connivent to use the distribution of principal
component $y\equiv O^{-1}\zeta\in\mathbb{R}^{n}$ for which $\mathbf{Q}^{-1}$
is diagonalized as $O^{-1}\mathbf{Q}^{-1}O=\mathrm{diag}(\lambda_{1}%
,\ldots,\lambda_{n})$\ with $\lambda_{i}$\ being a strictly positive
eigenvalue of $\mathbf{Q}^{-1}$. The distribution of $y$ is given by
$P_{Y}\equiv P(G^{-1}\circ\pi_{\zeta})^{-1}:\mathcal{B}^{n}\rightarrow
\mathbb{R}$ where
\begin{align}
B &  \mapsto P(\pi_{\zeta}^{-1}(G(B)))\nonumber\\
&  =N_{z}\int_{B}\exp\left[  -\frac{1}{2k_{\text{B}}T_{\mathrm{ex}}}\sum
_{i=1}^{n}\lambda_{i}y_{i}^{2}\right]  dy,\label{distribution of y}%
\end{align}
which is the joint distribution of 1-dimensional Gaussian distributions
$\exp\left[  -\frac{1}{2k_{\text{B}}T_{\mathrm{ex}}}\lambda_{i}y_{i}%
^{2}\right]  dy_{i}$, $i=1,\ldots,n$. Note that the distribution of $\zeta$,
or $y=O^{-1}\zeta$, is not used in obtaining physical information, such as the
long-time average of physical variable in Eq.~(\ref{PS variable average}), but
the explicit form of $P_{Y}$\ can be utilized to monitor the convergence of
the distribution generated by the flow and numerically judge the ergodicity.

\subsubsection{We determine $\mathbf{Q}$\textbf{\label{We determine Q}}}

Here we discuss how we determine $\mathbf{Q}$\ or $\mathbf{Q}^{-1}$. Its
overall amplitude can be set by a scale factor\ as in the case of the scalar
$Q$, as in\ the original NH~\cite{Nose prog}. Thus we should determine the
difference between the matrix elements it in a finer manner. Our criteria for
setting the matrix $\mathbf{Q}^{-1}$\ are as follows:\ 

\begin{itemize}
\item[(i)] it is\ symmetric and positive definite;

\item[(ii)] its eigenvalues and eigen vectors are explicitly obtained;

\item[(iii)] it should not be diagonal;

\item[(iv)] its diagonal components are nevertheless sufficiently larger than
off-diagonal components;

\item[(v)] randomness can be easily introduced in the elements.\ 
\end{itemize}

The reason of these requirements is as follows: (i) has been already assumed
and the necessity has also been discussed. (ii) is required to explicitly
obtain the distribution of $\zeta$\ or $y\equiv G^{-1}\zeta$. (iii) is needed
to enhance the mixing of different components $\zeta_{1},\ldots,$ and
$\zeta_{n}$\ through the friction term%
\begin{equation}
-\tau_{i}\left(  \zeta\right)  p_{i}=-\left(  \mathrm{Q}_{i1}^{-1}\zeta
_{1}+\cdots+\mathrm{Q}_{ii}^{-1}\zeta_{i}+\cdots\mathrm{Q}_{in}^{-1}\zeta
_{n}\right)  p_{i}%
\end{equation}
in Eq.~(\ref{massive NH EOM 2}). Otherwise, Eq.~(\ref{massive NH EOM 2}) turns
out to be the same form as that of the original NH, leading to the
nonergodicity in the case of the $n$HO as discussed in
Section~\ref{Non ergodicity for isotropic oscillator system}. (iv) if
$\mathrm{Q}_{ii}^{-1}$\ is small, then the contribution of $\zeta_{i}$ derived
by Eq.~(\ref{massive NH EOM 3}) will not be much assessed, so that the
equipartition~(\ref{equipartition})\ will not be enhanced at least in a
relatively short time scale. (v) is needed to break isotropy or symmetries in
the target physical system. It is also useful to emphasize the difference
between the splitting NH\ and the original NH, where the latter can be
characterized as a uniform matrix $\mathbf{Q}^{-1}$ seen
in~(\ref{uniform matrix}).

Using the fact that a symmetric matrix $\mathbf{W}\in\mathrm{End}%
\mathbb{R}^{n}$\ is positive definite\ if and only if $\exists O\in O(n)$,
$\exists d_{1},\ldots,d_{n}>0$, $\mathbf{W}=O\,\mathrm{diag}(d_{1}%
,\ldots,d_{n})\mathbf{\,}^{\text{T}}O$, and using a representation of the
group $O(n)$,\ we propose the following procedures (1)--(4) for setting
$\mathbf{Q}^{-1}$:

(1) Choose values randomly for $\theta_{k,j}\in$ $]0,\epsilon\lbrack$\ with
$0<\epsilon\ll\pi$ for $1\leq j<k\leq n$,\ 

(2) define $O:=h_{n}h_{n-1}\cdots h_{2}$ for which $h_{k}:=r_{1}(\theta
_{k,1})r_{2}(\theta_{k,2})\cdots r_{k-1}(\theta_{k,k-1})$, where\
\[
r_{i}(\theta)\equiv\left[
\begin{array}
[c]{ccc}%
\mathbf{1}_{i-1} & \mathbf{0} & \mathbf{0}\\
\mathbf{0} & u_{2}(\theta) & \mathbf{0}\\
\mathbf{0} & \mathbf{0} & \mathbf{1}_{n-i-1}%
\end{array}
\right]  \in\mathrm{End}\mathbb{R}^{n}%
\]
with $\mathbf{1}_{i}$\ being the unit matrix of size $i$ and $u_{2}%
(\theta)\equiv\left[
\begin{array}
[c]{cc}%
\cos{\small \theta} & \sin{\small \theta}\\
{\small -}\sin{\small \theta} & \cos{\small \theta}%
\end{array}
\right]  $; for convenience, any $q\in O(n)$, such as $q\equiv\mathrm{diag}%
(1,\ldots,1,-1)$\ or interchange matrices $q\equiv J_{ij}$ can be inserted
among the products of $h_{k}$\ in defining $O$,

(3) set $d_{i}=1+\delta_{i}$ with $-\delta<\delta_{i}<\delta<1$\ for
$i=1,\ldots,n$, where $\delta_{i}\neq\delta_{j}$\ for $i\neq j$, and then,
using a scale factor $\lambda$, put $\mathbf{D}\equiv\lambda\,\mathrm{diag}%
(d_{1},\ldots,d_{n})\equiv:\mathrm{diag}(\lambda_{1},\ldots,\lambda_{n})$, and finally,

(4) define
\begin{equation}
\mathbf{Q}^{-1}:=O\mathbf{\,D\,}^{\text{T}}O.
\end{equation}

Then, condition (i) holds, and (ii) is clear since the eigen values\ of
$\mathbf{Q}^{-1}$ are $\lambda_{1},\ldots,\lambda_{n}$\ and $O_{i}%
\equiv\mathbf{\,}^{\text{T}}(O_{1i},\ldots,O_{ni})\in\mathbb{R}^{n}$\ is
obtained to be the eigen vector\ corresponding to $\lambda_{i}$ for
$i=1,\ldots,n$. Condition (iii) will hold, or reset some values of
$\theta_{k,j}$ if needed. A small $\epsilon$ that becomes the threshold of
$\theta_{k,j}$ is useful to contribute to the purpose (iv) in that each
$r_{i}(\theta)$\ is near the identity matrix. Randomness can be introduced
through the $n(n-1)/2$\ manifold parameters $\theta_{k,j}$ for the sake of (v).

\begin{example}
For $n=2$, we have $O=h_{2}=r_{1}(\theta_{2,1})=u_{2}(\theta_{2,1})$. For
$n=3$, we have \
\begin{align*}
O  &  =h_{3}h_{2}=r_{1}(\theta_{3,1})r_{2}(\theta_{3,2})r_{1}(\theta_{2,1})\\
&  =\left[
\begin{array}
[c]{ccc}%
\cos\theta_{3,1} & \sin\theta_{3,1} & 0\\
{\small -}\sin\theta_{3,1} & \cos\theta_{3,1} & 0\\
{\normalsize 0} & {\normalsize 0} & {\normalsize 1}%
\end{array}
\right]  \left[
\begin{array}
[c]{ccc}%
1 & 0 & 0\\
0 & \cos\theta_{3,2} & \sin\theta_{3,2}\\
{\normalsize 0} & {\small -}\sin\theta_{3,2} & \cos\theta_{3,2}%
\end{array}
\right]  \left[
\begin{array}
[c]{ccc}%
\cos\theta_{2,1} & \sin\theta_{2,1} & 0\\
{\small -}\sin\theta_{2,1} & \cos\theta_{2,1} & 0\\
{\normalsize 0} & {\normalsize 0} & {\normalsize 1}%
\end{array}
\right]  .
\end{align*}
Instead, we can use e.g., $O=J_{12}r_{1}(\theta_{3,1})J_{12}J_{12}r_{2}%
(\theta_{3,2})J_{12}J_{13}r_{1}(\theta_{2,1})J_{13}$, viz.,%
\begin{equation}
O=\left[
\begin{array}
[c]{ccc}%
\cos\theta_{3,1} & -\sin\theta_{3,1} & 0\\
\sin\theta_{3,1} & \cos\theta_{3,1} & 0\\
{\normalsize 0} & {\normalsize 0} & {\normalsize 1}%
\end{array}
\right]  \left[
\begin{array}
[c]{ccc}%
\cos\theta_{3,2} & 0 & \sin\theta_{3,2}\\
0 & 1 & 0\\
{\small -}\sin\theta_{3,2} & 0 & \cos\theta_{3,2}%
\end{array}
\right]  \left[
\begin{array}
[c]{ccc}%
1 & 0 & 0\\
0 & \cos\theta_{2,1} & -\sin\theta_{2,1}\\
{\normalsize 0} & \sin\theta_{2,1} & \cos\theta_{2,1}%
\end{array}
\right]  , \label{alt scheme for def of O}%
\end{equation}
which indicates the composition of rotations in $\mathbb{R}^{3}$\ around the
x,y,z-axis with angles $\theta_{2,1}$, $\theta_{3,2}$, and $\theta_{3,1}$, respectively.\ 
\end{example}

\section{Numerics\label{Numerics}}

We numerically tested our considerations, nonergodic property for the
conventional schemes and the ergodic property for the current scheme, using
the isotropic harmonic oscillator system defined by (\ref{isotropic condi})
with $n>1$. We set both the mass $m$\ and spring constant $k$\ to be $1$, and
put $k_{\text{B}}T_{\mathrm{ex}}=1$ (all quantities were treated as
dimensionless). Numerical integrations of ODEs\ were done by the explicit
second order scheme described in Ref.~\cite{Fukuda2019} for $10^{8}$\ time
steps with a unit time of $h=10^{-3}$, and the numerical errors were checked
to be within a tolerance within the scheme of the extended
system~\cite{PRE2006}.

We show that conventional method employing an EOM\ of the form of
Eq.~(\ref{identical nHO EOM}) fail in the ergodic sampling, as stated in
Proposition~\ref{nonergodicity for nHO}. As a conventional method, we have
used the NHC method (Example~\ref{NHC}) with the chain length $m=2$ and masses
$Q_{1}=Q_{2}=1$. A first case we show is that with $n=2$, where we used the
initial value $x(0)=(0,0),$ $p(0)=(1,1),$ and $\zeta(0)=(0,0)$. Figure 1 shows
that the trajectory of $(x_{1},p_{1})$\ and their marginal distributions. The
trajectory shows a hall in a vicinity of the origin, and this clearly affects
the distribution of $x_{1}$.\ The distribution of $p_{1}$\ is also weird. Due
to the a special setting of the initial condition, $x_{1}(t)=x_{2}(t)$\ and
$p_{1}(t)=p_{2}(t)$ for all time $t$, so that the trajectory of $(x_{2}%
,p_{2})$\ and the distributions\ are totally the same as that for
$(x_{1},p_{1}),$\ respectively. In terms of the symmetry, this special initial
condition obeys a symmetry of the interchange, $S=\left[
\begin{array}
[c]{cc}%
0 & 1\\
1 & 0
\end{array}
\right]  \in O(2)$, so that the initial value\ $\omega_{0}\equiv\left(
x(0),p(0),\zeta(0)\right)  $ falls in an invariant set $A=A_{\mathcal{R}%
}=A_{O(2)_{\times}}$, defined in Eq.~(\ref{def of A_R})\ and utilized in the
decomposition~(\ref{case2}) (see Appendix). Thus, the solution always falls in
the invariant set $A$, indicating the fact that $x_{1}(t)=x_{2}(t)$\ and
$p_{1}(t)=p_{2}(t)$ for all $t$. This initial condition, however, seems too
special and very severe. Thus, we also treated another condition such that
$x_{1}(0)=x_{2}(0)=0,$ $p_{1}(0)=1,$ $p_{2}(0)=2$ (with $\zeta(0)=(0,0)$,
which was the same in all the cases below). Nevertheless, this initial
condition obeys a symmetry $S=\frac{1}{5}\left[
\begin{array}
[c]{cc}%
-3 & 4\\
4 & 3
\end{array}
\right]  \in O(2)$, so that this solution is also confined in the invariant
space, $\omega(t)\in A$\ for all $t$. Although it does not hold that
$x_{1}(t)=x_{2}(t)$\ and $p_{1}(t)=p_{2}(t)$ for all $t$,\ this confinement
severely effects the motion, and the trajectories of $x,p$\ and their
distributions exhibited similar nonergodic behavior as that in Figure 1 (not shown).

In the second case, we changed the initial condition only and set as
$x_{1}(0)=1,x_{2}(0)=0,$ $p_{1}(0)=0,$ $p_{2}(0)=0.01$, which does not have
any symmetry in $O(2)_{\times}$.\ The initial value\ $\omega_{0}$ is in
$A^{[+]}$\ because $\gamma_{12}(\omega_{0})=\frac{1}{2}(x_{1}(0)p_{2}%
(0)-p_{1}(0)x_{2}(0))>0$ (see Eq.~(\ref{def of gamma})\ and
Proposition~\ref{nonergodicity for nHO}. Thus the solution is not confined in
the \textquotedblleft small\textquotedblright\ subspace $A$\ but confined in
$A^{[+]}$, which is \textquotedblleft large\textquotedblright. However, as
shown in Figure~2, the distributions are far from the theoretical Gaussian
distributions and the trajectories are biased, suggesting a certain structure.
The third case we studied is the case with $x_{1}(0)=1,x_{2}(0)=0,$
$p_{1}(0)=0,$ $p_{2}(0)=-2$, which also has no symmetry in $O(2)_{\times}%
$\ and $\omega_{0}\in A^{[-]}$\ (because $\gamma_{12}(\omega_{0})<0$). This
yielded a relatively good results for trajectories and distributions (not
shown). Although a hall (which is smaller compared with that in the above
cases) was observed in the $(x,p)$\ trajectories and unignorable errors were
admitted in the distributions, it might be sufficient in practical
simulations. However, a clear numerical evidence for nonergodicity is a
definiteness of the signature of $\gamma_{12}(\omega(t))$. It should be a null
occurrence that $\gamma_{12}(\omega(t))=0$\ for all $t$, if the flow is
ergodic. Furthermore, there should not be the case where $\gamma_{12}%
(\omega(t))>0$\ for all $t$\ or $\gamma_{12}(\omega(t))<0$\ for all $t$.
Otherwise, it breaks the ergodicity and contradicts the BG distribution. In
fact, its average should be zero under the BG\ distribution: $\bar{\gamma
}_{ij}=\left\langle \gamma_{ij}\right\rangle =\left\langle \gamma
_{ij}\right\rangle _{\text{BG}}=0$ if the flow is ergodic with respect to
$\exp\left[  -\beta U(x)+K(p)\right]  \rho_{\text{Z}}\left(  \zeta\right)
d\omega$ for any smooth, positive, integrable $\rho_{\text{Z}}$\ (as long as
$\int_{D}x_{k}\exp\left[  -\beta U(x)\right]  dx$\ are finite for $k=i$ and
$j$). Figure 3 shows $\gamma_{12}(\omega(t))$\ for the three cases above. The
first case (Fig. 3a) corresponds to the null case $\gamma_{12}(\omega
(t))=0$\ for all $t$, and the second (Fig. 3b) and third (Fig. 3c)\ cases
correspond to $\gamma_{12}(\omega(t))>0$\ and $\gamma_{12}(\omega(t))<0$\ for
all $t$, respectively. These results show that the conventional method sampled
the phase space in a nonergodic manner. Note also that the magnitude of
$\gamma_{12}(\omega(t))$\ in the third case (Fig. 3c) is larger than the
second case (Fig. 3b). This result may be the reason why the third case shows
relatively good sampling; namely, trajectories staying near the invariant set
$A$\ shows bad sampling, whereas trajectories that can be away from
$A$\ relatively show good (but not exact) sampling. These staying features
near $A$\ may suggest a kind of stability of the invariant set $A$.

We tested the splitting NH EOM~(\ref{massive NH EOM}), currently provided
scheme, using the same harmonic oscillator system as above. A first example is
the case with $n=2$, where the initial value is the same as the most stiff
case used above, viz., $x(0)=(0,0),$ $p(0)=(1,1),$ $\zeta(0)=(0,0)$. We set
$\mathbf{Q}^{-1}$\ in the manner stated in Section \ref{We determine Q}, where
$\delta_{1}=0$, $\delta_{2}=0.2$, $\lambda=10$, and $\theta_{2,1}=0.5$ were
used. The scatter plots of all variables $x_{1}$, $x_{2}$, $p_{1}$, $p_{2}$,
$\zeta_{1}$, and $\zeta_{2}$ are shown in Figure 4. They are well sampled in
the phase space. The distributions agreed the theoretical distribution, and
the errors were sufficiently small, for which we have used variable $y$,
instead of $\zeta$, as indicated in Eq.~(\ref{distribution of y}). We also
observed that $\gamma_{12}(\omega(t))$\ does not indicate the
positive/negative definiteness, as in the conventional method, and rapidly
converged to the theoretical value $0$. We had similar results for other
initial conditions. Next, we show the results for other setting of
$\mathbf{Q}^{-1}$, where $\delta_{2}=0.8$ and $\theta_{2,1}=0.8$, while the
other conditions are the same as above. This is a setting where ellipsoid
distributed feature for $\zeta$\ is emphasized. We observe in Figure~5 that
$\zeta$ were distributed ellipsoidally\ around the origin and sampled
correctively, as indicated in the theoretical contours and the distributions
for $y$.

The next example for the splitting NH EOM~(\ref{massive NH EOM}) is the case
with $n=3$. In the procedures for setting $\mathbf{Q}^{-1}$, we put
$\theta_{3,1}=\theta_{3,2}=\theta_{2,1}=0.5$, $\delta_{1}=-0.2$, $\delta
_{2}=0$, $\delta_{3}=0.2$, and $\lambda=10$, and utilized
Eq.~(\ref{alt scheme for def of O}). Initial values were $x_{i}(0)=0,$
$p_{i}(0)=1,$ $\zeta_{i}(0)=0$ for all $i=1,2,3$ (same for the cases of
$n=2$). As shown in Figure 6, the scatter plots indicate the ergodic sampling,
and the distribution for each variable $x_{i}$, $p_{i}$, $\zeta_{i}$\ for
$i=1,2,3$ agrees with the theoretical distribution, respectively, as indicated
by the small errors. This also shows that the sampling were good even if
$\theta_{k,j}$\ were not set randomly. On the Basis of these results, we
conclude that the current method accurately corresponds to the ergodicity.

\section*{Appendix}

We say that a linear symmetry $S:\mathbb{R}^{n}\rightarrow\mathbb{R}^{n}%
$\ acts on ODE~(\ref{target EOM}) if it satisfies the following:

\begin{definition}
\label{condi of S} $S\in\mathrm{End}\mathbb{R}^{n}$ preserves the
functions\ $\lambda$\ and $\Lambda$\ and the domain $D$\ such that
$\lambda\left(  S(x),S(p),\zeta\right)  =\lambda\left(  x,p,\zeta\right)  $
and $\Lambda\left(  S(x),S(p),\zeta\right)  =\Lambda\left(  x,p,\zeta\right)
$ hold for all $\left(  x,p,\zeta\right)  \in\Omega$ and $S(D)\subset D$.
Commutativities also hold: $\mathbf{M}^{-1}\circ S=S\circ\mathbf{M}^{-1}$\ and
$F\circ S=S\circ F$.\ 
\end{definition}

We denote by $\mathcal{S}$\ the set of all $S\in\mathrm{End}\mathbb{R}^{n}%
$\ that acts on ODE~(\ref{target EOM}).

\begin{lemma}
For any $S\in\mathcal{S}$, we have $S(x(t))=x(t)$ and $S(p(t))=p(t)$\ for all
$t$ in an interval $J\subset\mathbb{R}$, if $\varphi:J\rightarrow
\Omega,t\mapsto\left(  x(t),p(t),\zeta(t)\right)  $ is a solution of
ODE~(\ref{target EOM})\ with an initial condition satisfying $S(x(0))=x(0)$
and $S(p(0))=p(0)$.
\end{lemma}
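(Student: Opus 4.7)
The plan is to exploit uniqueness of solutions for the ODE~(\ref{target EOM}) by constructing a second solution out of $\varphi$ via the symmetry $S$ and showing it agrees with $\varphi$ at $t=0$. Concretely, I would define the transformed curve
\[
\tilde{\varphi}: J \to \Omega, \quad t \mapsto \bigl(S(x(t)),\, S(p(t)),\, \zeta(t)\bigr),
\]
and verify that $\tilde{\varphi}$ again satisfies ODE~(\ref{target EOM}). Differentiating componentwise and using the linearity of $S$, the first equation becomes $\tfrac{d}{dt} S(x(t)) = S(\dot x(t)) = S(\mathbf{M}^{-1} p(t)) = \mathbf{M}^{-1} S(p(t))$ by the commutativity $\mathbf{M}^{-1} \circ S = S \circ \mathbf{M}^{-1}$ from Definition~\ref{condi of S}. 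The second equation yields $\tfrac{d}{dt} S(p(t)) = S(F(x(t))) + \lambda(\varphi(t)) S(p(t)) = F(S(x(t))) + \lambda(\tilde{\varphi}(t)) S(p(t))$, using $F \circ S = S \circ F$ and the invariance $\lambda(S(x),S(p),\zeta) = \lambda(x,p,\zeta)$. The third equation is immediate from $\Lambda(S(x),S(p),\zeta) = \Lambda(x,p,\zeta)$. Hence $\tilde{\varphi}$ solves the same ODE on $J$.

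Next I would check the initial conditions: by hypothesis $S(x(0)) = x(0)$ and $S(p(0)) = p(0)$, so $\tilde{\varphi}(0) = (x(0), p(0), \zeta(0)) = \varphi(0)$. Since the vector field $X$ associated with ODE~(\ref{target EOM}) is of class $C^1$ (it is built from $\mathbf{M}^{-1}$, $F$, $\lambda$, $\Lambda$, all $C^1$), the standard Picard--Lindelöf uniqueness theorem for the initial value problem applies, and I may conclude $\tilde{\varphi}(t) = \varphi(t)$ for every $t \in J$. Reading off the first two components then gives $S(x(t)) = x(t)$ and $S(p(t)) = p(t)$, as required; the identity $S(D) \subset D$ from Definition~\ref{condi of S} is what ensures $\tilde{\varphi}$ takes values in $\Omega$ so that the argument is well-posed.

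There is no serious obstacle here; the whole proof is essentially a bookkeeping exercise in applying the four compatibility conditions of Definition~\ref{condi of S} one by one to show that $S$ conjugates the flow with itself on the $(x,p)$ factor while leaving $\zeta$ untouched. The only point that needs minor care is that $S$ is assumed merely to act on $\mathbb{R}^n$ (not on $\Omega$), so when applying it componentwise to $\varphi$ one must be explicit that the $\zeta$-coordinate is left unchanged, and that consequently the hypothesis $\lambda(S(x),S(p),\zeta) = \lambda(x,p,\zeta)$ is exactly the correct statement to match up $\lambda(\tilde{\varphi}(t))$ with $\lambda(\varphi(t))$ in the computation above (and likewise for $\Lambda$).
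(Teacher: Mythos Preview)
Your proposal is correct and follows essentially the same approach as the paper: define the transformed curve $\hat\varphi(t)=(S(x(t)),S(p(t)),\zeta(t))$, use the conditions of Definition~\ref{condi of S} to check it is again a solution of the $C^1$ ODE with $\hat\varphi(0)=\varphi(0)$, and invoke uniqueness of the initial value problem. The paper's proof is simply a terser version of exactly this argument.
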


\begin{proof}
It follows from Definition~\ref{condi of S} that $\hat{\varphi}%
:\mathbb{R\supset}J\rightarrow\Omega,t\overset{\mathrm{d}}{\mapsto}\left(
S(x(t)),S(p(t)),\zeta(t)\right)  $ also becomes a solution of the $C^{1}$\ ODE
and $\hat{\varphi}(0)=\varphi(0)$ holds. Thus, the uniqueness of the initial
value problem ensures $\hat{\varphi}=\varphi$, so that $S(x(t))=x(t)$ and
$S(p(t))=p(t)$\ for all $t\in J$.
\end{proof}

We thus have for every $S\in\mathcal{S}$\ an invariant set,
\[
\Omega_{S}:=\Gamma_{s}^{D}\times\Gamma_{s}\times\mathbb{R}^{m},
\]
with%
\[
\Gamma_{s}\equiv\{p\in\mathbb{R}^{n}|\,S(p)=p\}
\]
and $\Gamma_{s}^{D}\equiv\Gamma_{s}\cap D\equiv\{x\in D$ $|$ $S(x)=x\}$,
indicating that the symmetry $(S(x),S(p))=(x,p)$\ is kept in the dynamics or
compatible with the ODE. To show the nonergodic
condition~(\ref{non ergode condi}), we take an approach that is to find \ an
invariant set $A$ whether it meets condition~(\ref{non ergode condi}) itself
or it separates the total phase space into three invariant sets,%
\begin{equation}
\Omega=A\sqcup A^{[+]}\sqcup A^{[-]}, \label{case2}%
\end{equation}
wherein $A^{[+]}$\ meets condition~(\ref{non ergode condi}). For this,
$A$\ should be "large" (for otherwise situation in choosing $A=\Omega_{S}$,
there is the extremely small case $\Gamma_{s}=\emptyset$ or a case of a low
dimensional subspace). Our target for $A$ is thus an invariant set that are
summed up these $\Omega_{S}$ in a certain manner:
\begin{equation}
A_{\mathcal{R}}\equiv\bigcup_{S\in\mathcal{R}}\Omega_{S}=\bigcup
_{S\in\mathcal{R}}(\Gamma_{s}^{D}\times\Gamma_{s})\times\mathbb{R}^{m},
\label{def of A_R}%
\end{equation}
where $\mathcal{R}$\ is a certain subset of $\mathcal{S}$\ such that it is
sufficiently large but not too large. For the latter condition, for example,
we should remove the case where $S$\ is the identity $\mathrm{id}%
_{\mathbb{R}^{n}}$, otherwise $A_{\mathcal{S}}$ becomes "too large"
($S=\mathrm{id}_{\mathbb{R}^{n}}$ provides $\Omega_{\mathrm{id}_{\mathbb{R}%
^{n}}}=\Omega$ and so yields $\Omega\backslash A_{\mathcal{R}}=\emptyset$,
which does not contribute to the nonergodic condition~(\ref{non ergode condi})
for $A\equiv A_{\mathcal{R}}$). \ 

We will show that~(\ref{case2}) holds\ with $A^{[\pm]}\equiv \Upsilon_{ij}%
^{\pm}$ if $A\equiv A_{\mathcal{R}}$, in a special case of the isotropic
$n$HO. Here $\Upsilon_{ij}^{\pm}$ are defined in Lemma~\ref{L1}, and this fact
can explain the route why $\Upsilon_{ij}^{\pm}$ arise. That is, they arise as
a complementary set to a sum, in a certain range $\mathcal{R}$, of the
invariant set $\Omega_{S}$\ based on the symmetry $S$\ that acts on the ODE.
To show the issue, we restrict the condition such that the dependence of
$x,p$\ in\ the functions $\lambda$\ and $\Lambda$\ is only through the
potential and kinetic energies; viz.,

\begin{condition}
\label{condi of A,B}There exist $C^{1}$\ functions $\tilde{\lambda}%
,\tilde{\Lambda}:\mathbb{R}\times\mathbb{R}\times\mathbb{R}^{m}\rightarrow
\mathbb{R}\ $such that$\ \lambda\left(  x,p,\zeta\right)  =\tilde{\lambda
}\left(  U(x),K(p),\zeta\right)  $ and $\Lambda\left(  x,p,\zeta\right)
=\tilde{\Lambda}\left(  U(x),K(p),\zeta\right)  $ for all $\left(
x,p,\zeta\right)  \in\Omega$
\end{condition}

This is not a special condition and are satisfied by Examples~\ref{NH}%
--\ref{GGMT}. For the harmonic oscillator system described by
Eq.~(\ref{isotropic condi}) under condition~\ref{condi of A,B}, orthogonal
transforms of $\mathbb{R}^{n}$ actually act on ODE~(\ref{target EOM}):

\begin{lemma}
$\mathcal{S\supset}O(n)\equiv\{S\in\mathrm{End}\mathbb{R}^{n}$ $|$
$^{\mathrm{T}}SS=\mathrm{id}_{\mathbb{R}^{n}}\}$ holds for the harmonic
oscillator system.
\end{lemma}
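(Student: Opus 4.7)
The plan is to verify each clause of Definition~\ref{condi of S} directly for an arbitrary $S\in O(n)$, using only the isotropy hypothesis~(\ref{isotropic condi}) together with Condition~\ref{condi of A,B}. There are four things to check: preservation of $\lambda$, preservation of $\Lambda$, the domain condition $S(D)\subset D$, and the two commutativities $\mathbf{M}^{-1}\circ S = S\circ \mathbf{M}^{-1}$ and $F\circ S = S\circ F$.

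The main observation is that, for the isotropic system, the potential energy satisfies $U(x) = \tfrac{1}{2}k(x\,|\,x)$ (since $F = -\nabla U = -kx$) and the kinetic energy satisfies $K(p) = \tfrac{1}{2m}(p\,|\,p)$. Both are functions only of the Euclidean norms $|x|$ and $|p|$, which $O(n)$ preserves. Hence for every $S\in O(n)$,
\begin{equation*}
U(Sx) = \tfrac{1}{2}k(Sx\,|\,Sx) = \tfrac{1}{2}k(x\,|\,{}^{\mathrm{T}}S S x) = U(x),\qquad K(Sp) = K(p).
\end{equation*}
Combining this with Condition~\ref{condi of A,B} gives $\lambda(Sx,Sp,\zeta) = \tilde{\lambda}(U(Sx),K(Sp),\zeta) = \tilde{\lambda}(U(x),K(p),\zeta) = \lambda(x,p,\zeta)$, and identically for $\Lambda$.

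For the commutativities, both $\mathbf{M}^{-1} = m^{-1}\mathbf{1}_n$ and the linear force $F(x) = -kx = (-k\,\mathrm{id}_{\mathbb{R}^n})(x)$ are scalar multiples of the identity, hence commute with every $S\in\mathrm{End}\,\mathbb{R}^n$, in particular with every $S\in O(n)$. For the domain condition, the isotropic quadratic potential and kinetic energies are defined on all of $\mathbb{R}^n$, so we may take $D = \mathbb{R}^n$ and then $S(D) = \mathbb{R}^n \subset D$ trivially. Since all four conditions of Definition~\ref{condi of S} are met, $S\in\mathcal{S}$, proving $O(n)\subset\mathcal{S}$.

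No step poses a real obstacle; the proof is essentially a direct verification enabled by two structural facts of the isotropic setting — that $\mathbf{M}$ and the stiffness are scalar, and that $\lambda,\Lambda$ enter the ODE only through the rotation-invariant quantities $U$ and $K$. The only thing worth a cautionary word is the role of Condition~\ref{condi of A,B}: without it one would need to check invariance of $\lambda,\Lambda$ as functions of $(x,p)$ directly, and the statement would fail for thermostats coupling to non-isotropic observables of $x$ or $p$.
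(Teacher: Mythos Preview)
Your proof is correct and follows the same approach as the paper: both arguments rest on the $O(n)$-invariance of $U(x)=\tfrac{1}{2}k\|x\|^2$ and $K(p)=\tfrac{1}{2m}\|p\|^2$ (combined with Condition~\ref{condi of A,B} to get invariance of $\lambda,\Lambda$) together with the fact that $\mathbf{M}^{-1}$ and $F$ are scalar multiples of the identity and hence commute with every linear map. Your write-up is simply more explicit about the domain condition and the role of Condition~\ref{condi of A,B}, which the paper's one-line proof leaves implicit.
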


\begin{proof}
Take any $S\in O(n)$. Since $U(S(x))=U(x)$\ and $K(S(p))=K(p)$ hold for any
$(x,p)$, and since $\mathbf{M}^{-1}$ and $F$\ become diagonal, the conditions
in definition~\ref{condi of S} are valid, indicating $O(n)\subset\mathcal{S}$.
\end{proof}

We then put
\[
\mathcal{R}\equiv O(n)\backslash\{\mathrm{id}_{\mathbb{R}^{n}}\}=:O(n)_{\times
},
\]
viz., we sum up $\Omega_{S}$\ to make $A_{\mathcal{R}}$\ for all $S$\ that is
a non-identical orthogonal transform of $\mathbb{R}^{n}$. We also restrict our
consideration for $n=2$\ for ease, wherein the discussion would be extended to
a larger $n$. The following proposition shows how $\Upsilon_{12}^{\pm}%
$\ arises from $\mathcal{R}$.

\begin{proposition}
\label{decomposition}It holds that $A_{\mathcal{R}}=\Upsilon_{12}^{0}$, and
the decomposition~(\ref{case2}) holds with $A=A_{\mathcal{R}}$ and $A^{[\pm
]}=\Upsilon_{12}^{\pm}$.
\end{proposition}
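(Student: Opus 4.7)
The plan is to reduce the proposition to the set-equality $A_{\mathcal{R}}=\Upsilon_{12}^{0}$, since once that is in hand the partition $\Omega=\Upsilon_{12}^{+}\sqcup\Upsilon_{12}^{0}\sqcup\Upsilon_{12}^{-}$ is automatic: $\gamma_{12}$ is a real-valued function on $\Omega$, so the three sets defined by the sign of $\gamma_{12}(\omega)=\tfrac12(x_{1}p_{2}-x_{2}p_{1})$ are pairwise disjoint with union $\Omega$. Substituting $A_{\mathcal{R}}$ for $\Upsilon_{12}^{0}$ then yields \eqref{case2} with $A^{[\pm]}=\Upsilon_{12}^{\pm}$; invariance of $A_{\mathcal{R}}$ follows because it is a union of the invariant sets $\Omega_{S}$, and invariance of $\Upsilon_{12}^{\pm}$ was already established in Lemma~\ref{L1}.

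So the task reduces to two inclusions. First I would prove $A_{\mathcal{R}}\subseteq\Upsilon_{12}^{0}$. Pick $\omega=(x,p,\zeta)\in A_{\mathcal{R}}$; by definition some $S\in O(2)\setminus\{\mathrm{id}_{\mathbb{R}^{2}}\}$ satisfies $S(x)=x$ and $S(p)=p$. Every non-identity element of $O(2)$ is either $-\mathrm{id}_{\mathbb{R}^{2}}$, a non-trivial rotation (fixed subspace $\{0\}$), or a reflection (fixed subspace one-dimensional). In every case the common fixed subspace has dimension at most one, so $x$ and $p$ are linearly dependent in $\mathbb{R}^{2}$, whence $x_{1}p_{2}-x_{2}p_{1}=0$ and $\omega\in\Upsilon_{12}^{0}$.

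Next I would prove the reverse inclusion $\Upsilon_{12}^{0}\subseteq A_{\mathcal{R}}$. Given $\omega$ with $x_{1}p_{2}-x_{2}p_{1}=0$, the vectors $x$ and $p$ are linearly dependent. If $x=p=0$, the element $S=-\mathrm{id}_{\mathbb{R}^{2}}\in\mathcal{R}$ fixes both, so $\omega\in\Omega_{S}\subseteq A_{\mathcal{R}}$. Otherwise at least one of $x,p$ is nonzero, say $v$, and the other lies on the line $\mathbb{R}v$; take $S$ to be the orthogonal reflection across $\mathbb{R}v$, which is a non-identity element of $O(2)$ fixing every vector of $\mathbb{R}v$, and in particular fixing both $x$ and $p$, giving $\omega\in\Omega_{S}\subseteq A_{\mathcal{R}}$.

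The main obstacle is genuinely the first (forward) inclusion, because one must argue over all possible non-identity elements of $O(2)$ simultaneously; the geometric trichotomy of $O(2)$ into $-\mathrm{id}$, rotations, and reflections handles this cleanly. The reverse inclusion is essentially a construction and should be routine. I would close with one sentence noting that the same blueprint extends to larger $n$ by classifying the fixed subspaces of elements of $O(n)$, which is why the authors restrict to $n=2$ here for ease of exposition.
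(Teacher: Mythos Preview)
Your proof is correct and follows essentially the same approach as the paper: both establish $A_{\mathcal{R}}=\Upsilon_{12}^{0}$ by analyzing the fixed-point subspaces of non-identity elements of $O(2)$ and identifying the resulting union of $\Gamma_{S}\times\Gamma_{S}$ with the set of linearly dependent pairs $(x,p)$. The paper carries this out via the explicit parametrization $O(2)\cong S^{1}\times\{\pm1\}$ and a direct computation of each $\Gamma_{s_{\theta}^{\sigma}}$, whereas you argue the two inclusions more abstractly through the rotation/reflection dichotomy; the underlying content is the same.
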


\begin{proof}
By using the fact that $O(2)$\ is bijectively parametrized by $S^{1}$\ and
signature,
\[
S^{1}\times\{\pm1\}\rightarrow O(2),\text{ }(\theta,\sigma)\mapsto\left[
\begin{array}
[c]{cc}%
\cos\theta & -\sigma\sin\theta\\
\sin\theta & \sigma\cos\theta
\end{array}
\right]  =:s_{\theta}^{\sigma},
\]
and by solving an eigenvalue problem $s_{\theta}^{\sigma}(p)=p$, we see
\begin{align*}
\Gamma_{s_{\theta}^{+1}}  &  =\left\{
\begin{array}
[c]{cc}%
\mathbb{R}^{2} & \text{if }\theta=0\\
\{0,0\} & \text{otherwise}%
\end{array}
\right\}  ,\\
\Gamma_{s_{\theta}^{-1}}  &  =\left\{
\begin{array}
[c]{lc}%
\mathbb{R}\times\{0\} & \text{if }\theta=0\\
\{0\}\times\mathbb{R} & \text{if }\theta=\pi\\
\left\{  (p_{1},p_{2})\in\mathbb{R}^{2}\text{ }|\text{ }p_{2}=\frac
{1-\cos\theta}{\sin\theta}p_{1}\right\}  & \text{otherwise}%
\end{array}
\right\}  .
\end{align*}
Note that $\Gamma_{s_{\theta}^{-1}}$\ is a line through origin of
$\mathbb{R}^{2}$\ with a gradient $k=k_{\theta}$\ for $\theta\in
S^{1}\backslash\{0,\pi\}$, where $k_{\theta}$\ can take any value in
$\mathbb{R}_{\times}$, and that $\Gamma_{s_{0}^{-1}}=\mathbb{R}\times
\{0\}$\ and $\Gamma_{s_{\pi}^{-1}}=\{0\}\times\mathbb{R}$\ are also lines with
gradients\ $0$\ and $\infty$, respectively.\ Thus $\Gamma_{s_{\theta}^{-1}%
}=L_{k}$, a line through origin of $\mathbb{R}^{2}$\ with a gradient
$k\in(\mathbb{-\infty},\mathbb{\infty]}$. Applying $\mathcal{R=}O(2)_{\times
}=\{s_{\theta}^{+1}$ $|$ $\theta\in S_{\times}^{1}\}\cup\{s_{\theta}^{-1}$ $|$
$\theta\in S^{1}\}$, we hence get%
\begin{align*}
&  \bigcup_{S\in\mathcal{R}}(\Gamma_{s}^{D}\times\Gamma_{s})\\
&  =\bigcup_{\theta\in S_{\times}^{1}}(\Gamma_{s_{\theta}^{+1}}\times
\Gamma_{s_{\theta}^{+1}})\cup\bigcup_{\theta\in S^{1}}(\Gamma_{s_{\theta}%
^{-1}}\times\Gamma_{s_{\theta}^{-1}})\\
&  =\bigcup_{\theta\in S^{1}}(\Gamma_{s_{\theta}^{-1}}\times\Gamma_{s_{\theta
}^{-1}})\\
&  =\bigcup_{k\in(\mathbb{-\infty},\mathbb{\infty]}}(L_{k}\times L_{k})\\
&  =\{(x_{1},kx_{1},p_{1},kp_{1})|\text{ }x_{1},p_{1},k\in\mathbb{R}\}\\
&  \cup\text{ }\{0\}\times\mathbb{R}\times\{0\}\times\mathbb{R}\\
&  =\{(x_{1},x_{2},p_{1},p_{2})\in\mathbb{R}^{4}\text{ }|\text{ }x_{1}%
p_{2}-x_{2}p_{1}=0\}.
\end{align*}
Thus $A_{\mathcal{R}}=\bigcup_{S\in\mathcal{R}}(\Gamma_{s}^{D}\times\Gamma
_{s})\times\mathbb{R}^{m}=\{(x,p)\in\mathbb{R}^{4}|$ $x_{1}p_{2}-x_{2}%
p_{1}=0\}\times\mathbb{R}^{m}=\{\omega\in\Omega$ $|$ $\gamma_{12}%
(\omega)=0\}=\Upsilon_{12}^{0}$. Therefore, decomposition~(\ref{case2}) holds
as $\Omega=\Upsilon_{12}^{0}\sqcup \Upsilon_{12}^{+}\sqcup \Upsilon_{12}^{-}$.
\end{proof}

Note that the explicit form of $\Gamma_{s}$\ in the proof directly indicates
nontrivial examples to explain that $\mathcal{R}$\ should be sufficiently
large. For example, if we take $\mathcal{R}$\ as a one point set, then
$A_{\mathcal{R}}=\bigcup_{S\in\mathcal{R}}(\Gamma_{s}^{D}\times\Gamma
_{s})\times\mathbb{R}^{m}$ does not separate the phase space into the three
spaces as described in~(\ref{case2}): if we set $\mathcal{R}=\{s_{\theta}%
^{+1}\}$ with $\theta\neq0$ or $\mathcal{R}=\{s_{\theta}^{-1}\}$, then
$\bigcup_{S\in\mathcal{R}}(\Gamma_{s}^{D}\times\Gamma_{s})=\Gamma_{s_{\theta
}^{+1}}\times\Gamma_{s_{\theta}^{+1}}=\{0,0,0,0\}\subset\mathbb{R}^{4}$ or
$\bigcup_{S\in\mathcal{R}}(\Gamma_{s}^{D}\times\Gamma_{s})=\Gamma_{s_{\theta
}^{-1}}\times\Gamma_{s_{\theta}^{-1}}=$line$\times$line$\subset\mathbb{R}^{4}%
$, respectively, clearly induces no separation. So does for e.g. any finite
set $\mathcal{R}=\{s_{\theta_{1}}^{\pm1},\cdots,s_{\theta_{q}}^{\pm1}\}$.

\section*{ACKNOWLEDGMENTS}

This work was supported by a Grant-in-Aid for Scientific Research (C)
(17K05143 and 20K11854) from JSPS and the \textquotedblleft Development of
innovative drug discovery technologies for middle-sized
molecules\textquotedblright\ from Japan Agency for Medical Research and
development, AMED. We thank Profs. Haruki Nakamura and Akinori Kidera for
their continuous encouragement.

\bigskip

\bigskip\newpage

{\Huge Figure Captions}

\bigskip

Fig. 1. Simulation results obtained by a conventional thermostat method (NHC
with the chain length $2$) for the $2$HO using an initial condition
$x_{1}(0)=x_{2}(0)=0,$ $p_{1}(0)=p_{2}(0)=1$: (a) trajectory (scatter plot) of
$(x_{1},p_{1})$\ and marginal distributions\ for (b) $x_{1}$ and (c) $p_{1}$,
where corresponding theoretical distributions and the discrepancies are also
shown. The results for $x_{2}$ and $p_{2}$ are exactly the same as that for
$x_{1}$ and $p_{1}$\ (see text). \ 

\bigskip

Fig. 2. Simulation results obtained by the NHC for the $2$HO using
$x_{1}(0)=1,$ $x_{2}(0)=0,$ $p_{1}(0)=0,$ $p_{2}(0)=0.01$: (a) marginal
distributions\ for (a) $x_{1}$ and (b) $p_{1}$, and trajectory of $(\zeta
_{1},\zeta_{2})$.

\bigskip

Fig. 3. Trajectories of $\gamma_{12}(\omega)$\ obtained by the NHC for the
$2$HO using initial conditions of (a) that in Fig.~1, (b) that in Fig.~2, and
(c) $x_{1}(0)=1,x_{2}(0)=0,$ $p_{1}(0)=0,$ $p_{2}(0)=-2$.

\bigskip

Fig. 4. \ Simulation results obtained by a current thermostat method
(splitting NH) for the $2$HO using $x_{1}(0)=x_{2}(0)=0,$ $p_{1}%
(0)=p_{2}(0)=1$: (a) trajectories (scatter plot) of $(x_{1},p_{1})$\ and (b)
$(x_{2},p_{2})$, and the marginal distributions\ for (c) $x_{1}$ and $x_{2}%
$\ and (d) $p_{1}$ and $p_{2}$ (theoretical distributions and the
discrepancies are also shown); (e) trajectories of $(\zeta_{1},\zeta_{2}%
)$\ and (f) their marginal distributions\ represented in the principal
components $y_{1}$ and $y_{2}$; (g) trajectories of $\gamma_{12}(\omega)$\ and
its time average.

\bigskip

Fig. 5. Simulation results obtained by the splitting NH for the $2$HO, using
the different setting of the Nos\'{e} mass matrix $\mathbf{Q}$\ than that in
Fig.~4: (a) trajectory of $(\zeta_{1},\zeta_{2})$\ and the contours of the
theoretical distribution (rotated ellipsoids); the marginal distributions\ for
principal components (b) $y_{1}$ and (c) $y_{2}$.

\bigskip

Fig. 6. Simulation results obtained by the splitting NH for the $3$HO:
trajectories and distributions for (a)$\ x_{i}$, (b)$\ p_{i}$, and
(c)$\ y_{i}$ ($i=1,2,3$).
\end{document}